\newtheorem{proposition}{Proposition}
\newtheorem{lemma}{Lemma}
\newtheorem{definition}{Definition}
\newenvironment{proof}{\medskip\noindent{\it Proof. }}{\hfill$\square$}
\newenvironment{example}{\medskip\noindent{\it Example. }}{\medskip}
\def\RR{{\mathbb R}}    
\def\bA{{\bf{A}}}
\def\bB{{\bf{B}}}
\def\bC{{\bf{C}}}
\def\bE{{\bf{E}}}
\def\bG{{\bf{G}}}
\def\bL{{\bf{L}}}
\def\bN{{\bf{N}}}
\def\bP{{\bf{P}}}
\def\bx{\textbf{x}}
\def\by{\textbf{y}}
\def\bz{\textbf{z}}
\begin{document}

\begin{frontmatter}

\title{Chaotic Masking Protocol for Secure Communication and Attack Detection in Remote Estimation of Cyber-Physical Systems}

 \author[First]{Tao Chen}
 \author[Second]{Andreu Cecilia}
 \author[Third]{Daniele Astolfi}
 \author[First]{Lei Wang}
 \author[First]{Zhitao Liu}
 \author[First]{Hongye Su}

 \address[First]{\small College of Control Science and Engineering, Zhejiang University, P.R. China (e-mail: tao\_chen;lei.wangzju;ztliu; hysu69@zju.edu.cn)}
 \address[Second]{\small Universitat Polit\'ecnica de Catalunya, Avinguda Diagonal, 647, 08028 Barcelona, Spain. (e-mail: andreu.cecilia@upc.edu)}
 \address[Third]{\small Univ. Lyon, Universit\'e Claude Bernard
Lyon 1, CNRS, LAGEPP UMR 5007, F-69100 Villeurbanne, France (e-mail: daniele.astolfi@univ-lyon1.fr)}

\thanks{This research is carried out within the framework of the EU-funded Recovery, Transformation and Resilience Plan (Next Generation) thorugh the project ACROBA and the Spanish Ministry of Universities funded by the European Union - NextGenerationEU (2022UPC-MSC-93823). This work is part of the Project MAFALDA (PID2021-126001OB-C31) funded by MCIN/ AEI /10.13039/501100011033 and by ``ERDF A way of making Europe''. This work is part of the project MASHED (TED2021-129927B-I00),  funded by MCIN/ AEI/10.13039/501100011033 and by the European Union Next GenerationEU/PRTR.
This work has been also partially founded by the ANR project Alligator (ANR-22-CE48-0009-01).
}


\begin{abstract}                          
 In remote estimation of cyber-physical systems (CPSs), sensor measurements transmitted through network may be attacked by adversaries, leading to leakage risk of privacy (e.g., the system state), and/or failure of the remote estimator. To deal with this problem, a chaotic masking protocol is proposed in this paper to secure the sensor measurements transmission. In detail, at the plant side, a chaotic dynamic system is deployed to encode the sensor measurement, and at the estimator side, an estimator estimates both states of the physical plant and the chaotic system. With this protocol, no additional secure communication links is needed for synchronization, and the masking effect can be perfectly removed when the estimator is in steady state. Furthermore, this masking protocol can deal with multiple types of attacks, i.e., eavesdropping attack, replay attack, and stealthy false data injection attack.
\end{abstract}
\begin{keyword}
Cyber-physical Systems, secure estimation, chaotic masking, attack detection.
\end{keyword}
\end{frontmatter}

\section{Introduction}
\label{sec:intro}

Integrating technologies including control, communication, and computation, cyber-physical systems (CPSs) exhibit a substantial promise across diverse industry sectors, encompassing smart cities, smart grids, smart manufacturing, and intelligent transportation \citep{ZHANG20211,humayed2017cyber}. Nevertheless, the complex intricate nature of these systems, notably the network-based data transmission, exposes them to vulnerabilities and potential security breaches, leading to great damages \citep{Stuxnet2011,2008Lessons}.

A remote estimator uses data collected remotely to estimate the state of a physical system, enabling remote monitor and control.
There exist multiple attack strategies for the remote estimation problem. Some remarkable examples are eavesdropping attack \citep{shang2022linear}, replay attack \citep{liu2021active}, and false data injection (FDI) attack \citep{7470566}.

An eavesdropping attack aims to gain unauthorized access to the state information of a physical plant by intercepting sensor measurements transmitted through the network. So the privacy of the transmitted information should be protected to deal with this kind of attack.
Replay attack involves recording sensor data for a certain amount of time, and replaying in a loop the recorded data when needed. Since the recorded data is taken from the true plant, it is not easy to be detected by a monitor or detector. An  FDI attack aims to destroy the estimation by injecting specific attack sequence on the communicated sensor reading. In general, the attack sequence is designed such that they can bypass the detector, i.e., the attack is stealthy. Hence, for replay attack and FDI attacks, the method to detect them plays a fundamental role.

One commonly used detector is the innovation-based detector, which relies on the discrepancy between predicted and actual received sensor outputs to detect attacks \citep{6897944}. In \citep{pasqualetti2013attack}, the concept of detectable attack is built, and then necessary and sufficient conditions are presented for the innovation-based detector to detect detectable attacks. However, the authors in \citep{liu2021active} show that a innovation-based detector may fail to detect replay attacks in the sense that its detection rate converges to the false alarm rate when replay attack happens. In addition, if an attacker is aware of the detector and designs attacks carefully, the detector may also fail to detect an FDI attack. For example, \citep{7470566} proposed a linear deception attack strategy, which ensures that the innovation term with attack preserves the same Gaussian distribution as the healthy one and can successfully bypass the innovation-based detector.  \citep{hu2018state} presented a FDI attack that ensures a bounded innovation term but an unbounded estimation error.

To protect the transmitted data and assist stealthy attack detection of innovation-based detector, watermark-based and encode-decode-based methods are usually proposed.
 In general, watermarks are random noises added to the control input to increase the detection probability of a FDI attack \citep{liu2021active,naha2023quickest,satchidanandan2016dynamic}. In \citep{liu2021active}, both independent and identically distributed (i.i.d) Gaussian and non-i.i.d. Gaussian noises are used as watermark to detect replay attacks. In \citep{satchidanandan2016dynamic},  a watermark strategy is used for more general attack types, where an attack can be any malicious behaviors of sensors.
Nonetheless, since the watermark signal would increase control costs, a lot of efforts have been paid to decrease the effect of watermark \citep{naha2023quickest,fang2020optimal,chen2023replay}.

Alternatively, encode-decode methods use random numbers to encode the  information to be transmitted and decode it at the receiver's side, e.g., \citep{shang2022linear,guo2022detection,2020Resilient,miao2016coding}. In \citep{guo2022detection}, to detect stealthy FDI
attacks, a stochastic coding scheme is proposed. The encode is based on adding i.i.d. Gaussian noise to sensor measurements, and decode is by subtracting the same noise. In \citep{miao2016coding}, to detect stealthy FDI attacks, the transmitted information is multiplied by a coding matrix, which depends on the system information and a random generator. In \citep{shang2022linear}, linear encryption strategies are used
 to against eavesdropping attack by degrading its estimation performance.

  To ensure accurate decoding of sensor measurements, it is essential for the random encoding and decoding processes to be synchronized. To mitigate the risk of directly transmitting decoding information, the utilization of pseudo-random number generators is recommended. This approach necessitates the distribution of secret random seeds and synchronization between the generators employed on both the sensor and controller sides. In \citep{2020Resilient}, chaotic oscillators are used to generate coding and decoding signals at plant and estimator side, respectively.
However, there still needs a secure channel to synchronize the oscillators at the both sides.

In \citep{cecilia2023masking}, a quasi-periodic signal is used to encode (mask) the sensor measurement, and a washout filter \citep{wang2020pre} is used to remove the masking signal. With this masking protocol, the privacy is protected and a FDI attack in the communication channel can be detected. It is worth mentioning that this protocol does not need additional communication links for synchronization. Inspired by this result, in this work we use chaotic signal to encode the sensor measurement. To decode the sensor measurement, an estimator combining with the  dynamics of the chaotic system is designed.

The main contribution of this work is summarized as follows:
\begin{itemize}
    \item This paper proposes a novel chaotic masking protocol that secures the sensor-estimator communication channel. The protocol is proven to be effective against multiple attack types, including eavesdropping, replay attacks, and false data injection attacks.

    \item Similar to the coding-decoding mechanisms, the proposed protocol can  recover the sensor outputs completely, without compromising estimation performance. Notably, unlike most coding-decoding mechanisms, the proposed protocol does not require an additional secure communication channel to synchronize the masking and de-masking pair.
\end{itemize}

 \noindent{\bf Notation}. We denote by $\mathbb{R}$ the set of real numbers, $\mathbb{C}$ the set of complex numbers, $\mathbb{R}^n$  the set of real space of $n$ dimension for any positive integer $n$.
 For $a, b \in \mathbb{R}$, $a\underset{\mathcal{H}_0}{\overset{\mathcal{H}_1}{\gtrless}}b$ returns $\mathcal{H}_1$ if $a>b$, and  $\mathcal{H}_0$, if $a<b$. For a variable $x\in\mathbb{R}^n$, $x_{i}$ denotes the $i^{\rm th}$ element of $x$.
 $x^{\tau}$ denotes the delayed signal, i.e., $x_k^{\tau}:=x({t-\tau})$. Given $x\in\mathbb{R} $, $\sigma\in \mathbb{R}_{\geq 0} $, define $\text{sat}_\sigma(x):= \text{max}\{-\sigma,\text{min}\{\sigma,x\}\}$. Given $x\in\mathbb{R}^n $, $\sigma:=(\sigma_1,...,\sigma_n)\in \mathbb{R}^n_{\geq 0} $, define $\text{sat}_\sigma(x):= (\text{sat}_\sigma(x_1),...,\text{sat}_\sigma(x_n))$.
 For $A\in \mathbb{R}^{n\times n} $, $A^{\top}$ denotes the transpose of matrix $A$, $A$ is called a Hurwitz matrix if all of its eigenvalues have negative real parts, and $\rho_{\text{max}} (A) (\rho_{\text{min}}(A) )$ denotes the maximum (minimun) eigenvalue of a symmetric matrix $A$.
   $I_{n\times n}$ denotes identity matrix of dimension $n \times n$. For column vectors $x\in \mathbb{R}^m$ and $y\in \mathbb{R}^n$, $\text{col}(x,y): = [x^\top,y^\top]^\top$. For matrices $A\in\mathbb{R}^{m\times m}$ and $B\in\mathbb{R}^{n\times n}$, $\text{diag}(A,B)$ denotes a diagonal matrix with diagonal blocks $A$ and $B$.

\section{Problem Formulation}

 \subsection{Remote Estimator}

 Consider  linear systems of the form
\begin{equation}\label{eq.system_linear}
\Sigma_p:
\; \left\{
\begin{split}
  \dot{x}&=A{x}+Bu\\
  y  &= C{x},
\end{split}
\right.
 \end{equation}
 where $x \in \mathbb{R}^{n_x}$ is the system state, $u \in \mathbb{R}^{n_u}$ is the known input signal, and $y\in\mathbb{R}^{n_y}$ is the sensor measurement.
  For monitoring or control purposes, the sensor measurement $y$ is assumed to be transmitted through wireless networks to a remote estimator  of the form
\begin{equation}\label{eq.system_linear_obsv_ori}
\Sigma_{e}:
\; \left\{\begin{split}
  \dot{\hat x}&=A{\hat x}+Bu+ L( y-\hat{ y} )\\
  \hat y  &= C{\hat x}.
 \end{split}
 \right.
\end{equation}
It is well known that if $L$ is chosen such that $A-LC$ is Hurwitz, the estimation error converges to zero, exponentially.

 In the literature, a detector is commonly deployed after the estimator to detect abnormal data from the received sensor measurements. In this paper, the following innovation-based abnormal detector \cite{mehra1971innovations,pasqualetti2013attack} is deployed as
\begin{equation}\label{eq.detector}
  g(z ) := z^{\top}z\underset{\mathcal{H}_0}{\overset{\mathcal{H}_1}{\gtrless}}\nu
\end{equation}
 where $z =y-\hat{y} $, $\mathcal{H}_0$ means that no alarm is triggered, $\mathcal{H}_1$ means that the detector alarms, and $\nu$ is a threshold implemented to avoid false alarms induced by sensor and/or channel noise.

In this paper, we consider that the transmission network may be attacked by adversaries, leading to  leakage risk of private data (e.g., the system state), and/or  failure of the remote estimator by maliciously modifying the transmitted information. Furthermore, it may cause security issues if the estimated states are further utilized for control or other purposes.
Specifically, we are mainly interested in following three types of attacks.
\begin{itemize}
    \item[i)] {\bf Eavesdropping attack.} The adversary has access to  the transmitted information over the wireless network, and aims to eavesdrop the information of system states.
       \item[ii)] {\bf Replay attack.} The adversary records the normal sensor data and replay them in sensor-controller channel when injecting false data into the controller-actuator channel, destroying the plant while avoid being detected by detector \eqref{eq.detector}.
    \item[iii)] {\bf Stealthy data injection attack.} The adversary modifies the   the transmitted information over the wireless network by adding stealthy signals $\{a\}$ such that the remote estimator fails while avoid being detected by detector \eqref{eq.detector}.
\end{itemize}

 The main purpose of this paper is securing the communication channel by protecting the transmission signal $y$. Specially, the security refers to
\begin{itemize}
    \item[i)] Prevent eavesdropper from estimating $x$ by using $y$ and the information of the linear system $\Sigma_p$.
     \item[ii)] Invalidate the replay attack strategy.
     \item[iii)]  Destroy the stealthiness of the data injection attack.
\end{itemize}

\section{System Protection with Chaotic Masking}
\subsection{Masking Approach}
 To secure the communication channel, a chaotic signal is used to mask the sensor measurement $y$, i.e.,
\begin{equation*}
    \by:= y+d,
\end{equation*}
where $d$ is the chaotic signal to be designed. In this way, the masked sensor measurement $\by$ is transmitted in the channel, instead of transmitting $y$ directly.

 The chaotic signal is generated by a chaotic system of the following form
\begin{equation}\label{eq.chaotic_system}
\Sigma_c:
\; \left\{\begin{split}
  \dot{\xi}&=\Phi{\xi}+ \varphi(\xi)\\
  d &= \Lambda {\xi}
\end{split}\right.
\end{equation}
where $\xi \in \Xi\subset \bm\bar\Xi  \subset  \mathbb{R}^{n_\xi}$ is the state of chaotic system, with $\Xi$ being a compact forward invariant set for the chaotic dynamics. $\bm\bar \Xi$ is a hypercube containing $\Xi$, with the center of $\bm\bar \Xi$ being $0_{n_\xi}$ and the length of its $i^\text{th}$ edge being $2\sigma_i$, where \begin{equation}\label{eq.sigma}
\sigma_i:=  \max_\xi| \xi_i|,~\xi\in\Xi,
\end{equation}
which is illustrated in Fig. \ref{fig.Xi} in the case $n_\xi=2$.
It is assumed that the mapping $\varphi(\xi)$ is locally Lipschitz, that is
\begin{equation}\label{eq.lipschitz_g}
\begin{split}
  \|\varphi(\xi_1)-\varphi(\xi_2)\|\leq  \ell_\varphi  \|\xi_1-\xi_2\|,~~\forall \xi_1, \xi_2 \in \bm\bar\Xi
 \end{split}
\end{equation}
for some positive constant $ \ell_\varphi $, which is related to $\Xi$.
It is further assumed that the model of the chaotic system, i.e., $\Phi, \varphi(\cdot)$ and $\Lambda$ are known for remote estimator, but unknown to the potential adversaries.

Finally, by letting
 $\bx: =\text{col}( \xi,  x) $,
   the plant  $\Sigma_p$ and the chaotic masking $\Sigma_c$ are compactly rewritten as
\begin{equation}\label{eq.extended_eta}
\Sigma_{\bP}:
\; \left\{\begin{split}
 \dot{\bx}&=\bA\bx+\bB u+\bG(\xi)\\
\by & = \bC\bx.
     \end{split}
     \right.
\end{equation}
where we defined
\begin{equation*}
\begin{split}
\bA:=\left [ \begin{array}{cc}
 \Phi  &  0\\
0  & A \\
\end{array} \right],
\bB := \left [ \begin{array}{c}
0 \\  B\\
\end{array} \right],
\bG(\xi): = \left [ \begin{array}{c}
\varphi( \xi) \\ 0\\
\end{array} \right],
\bC:= \left [ \begin{array}{cc}
\Lambda  &  C\\
\end{array} \right].
 \end{split}
\end{equation*}
Note that since $\varphi(\xi)$ is locally Lipschitz, $\bG(\xi)$ is also locally Lipschitz with the same constant $ \ell_\varphi$.

\begin{figure}[t]
\centering
\includegraphics[width=0.38\columnwidth]{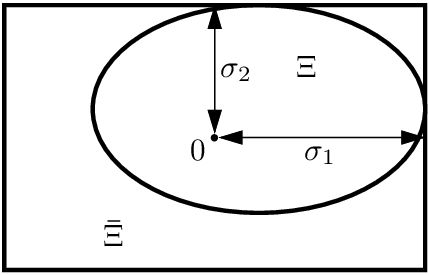}
 \caption{Representation of the set $\Xi$ and $\bm\bar\Xi$ in $\RR^2$.}
\label{fig.Xi}
\end{figure}

 The following example shows a chaotic system with form \eqref{eq.chaotic_system} and satisfying \eqref{eq.lipschitz_g}.

\begin{example}
(\textit{R$\ddot{o}$ssler prototype-4 system}) Consider the following system
 \begin{equation}\label{eq.rossler}
 \begin{split}
 \dot{{\xi}}_1 & = -{\xi}_2-{\xi}_3\\
 \dot{\xi}_2 &= \xi_1\\
 \dot{\xi}_3 &= a({\xi}_2-{\xi}_2^2)-b{\xi}_3
\end{split}
 \end{equation}
  which is chaotic when $a = b = 0.5$, as shown in Fig. \ref{fig.rossler}. This system can be rewritten in the form given by  \eqref{eq.chaotic_system}, where $\xi =\text{col}(\xi_1,\xi_2,\xi_3)$,
 \begin{equation*}
\begin{split}
 \Phi:=\left [ \begin{array}{ccc}
 0&-1  & -1\\
1 &0 & 0 \\
0&0.5&-0.5
\end{array} \right],\quad \varphi(\xi):= \left [ \begin{array}{c}
0\\
0\\
-0.5\xi_2^2
\end{array} \right].
 \end{split}
\end{equation*}
A numerical simulation, see Fig. \eqref{fig.rossler}, confirm that there exists a compact invariant set $\Xi$. In addition, the function $\varphi(\xi)$ is polynomial hence locally lipschitz. \qed
\end{example}

 \begin{figure}[t]
\centering
\subfigure[]{\includegraphics[width=0.48\columnwidth]{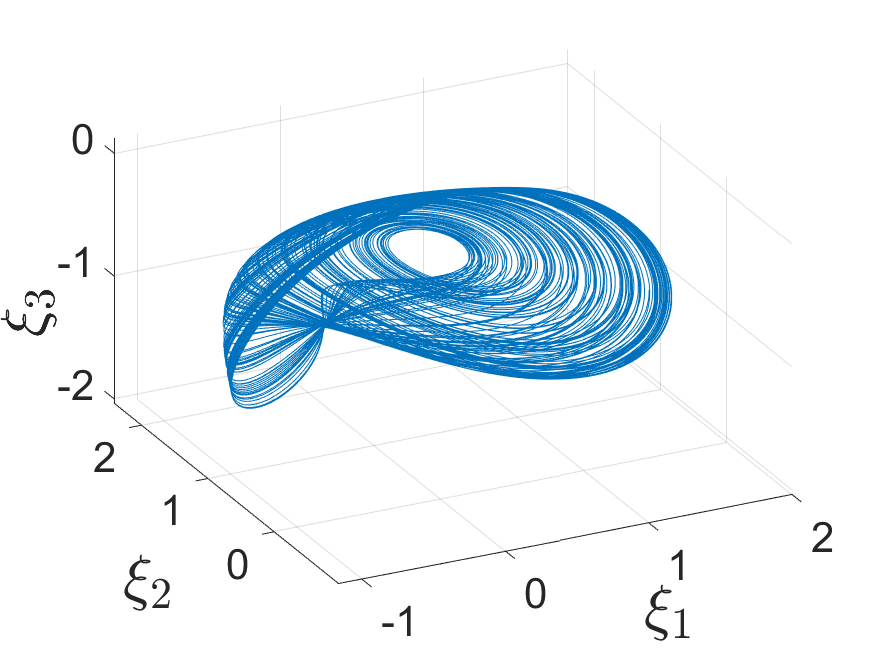}}
\subfigure[]{\includegraphics[width=0.48\columnwidth]{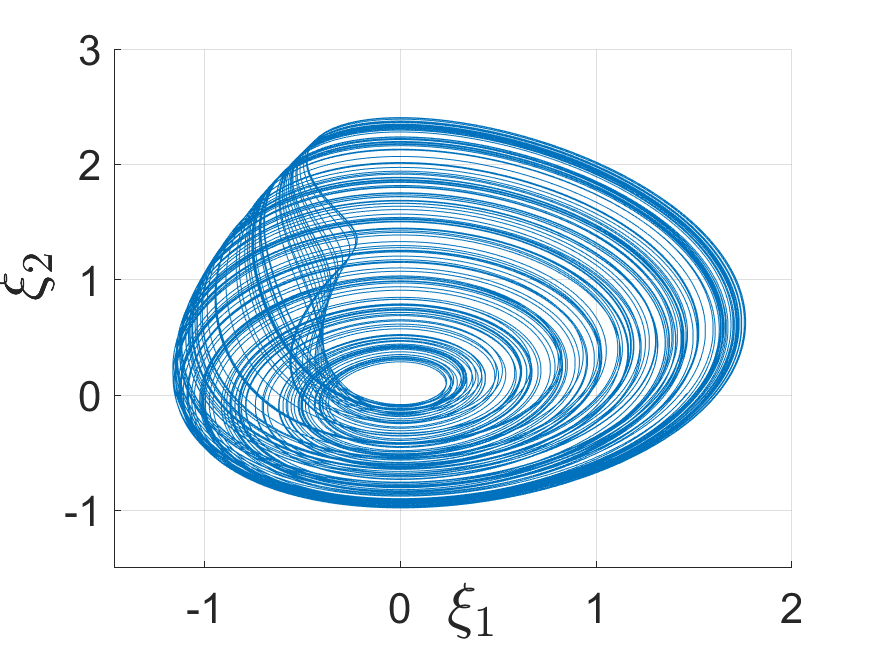}}
 \caption{\textit{R$\ddot{o}$ssler prototype-4 system} from Eq. \eqref{eq.rossler} for $a = b = 0.5$ with initial condition $\xi(0) = [0.1, 0.3,0]^\top $ .}
\label{fig.rossler}
\end{figure}


\subsection{De-masking Approach}

The de-masking is achieved by estimating the extended system $\Sigma_{\bP}$.
For this purpose, an extended observer is designed as
\begin{equation}\label{eq.extended_eta_obsv}
\Sigma_{\bE}:
\; \left\{\begin{split}
 \dot{\hat\bx}&=\bA\hat\bx+\bB u+\bG(\text{sat}_\sigma(\hat\xi))+\bL(\by- \hat{\by})\\
 \hat{\by}& = \bC\hat\bx.
\end{split}
\right.
\end{equation}
where $\sigma_i$ is calculated in \eqref{eq.sigma}.

 Comparing $\Sigma_{\bP}$ and $\Sigma_{\bE}$ and letting $\tilde \bx:= \bx-\hat \bx$ be the estimation error, we have
\begin{equation}\label{eq.extended_system3}
\begin{split}
 \dot{\tilde\bx}&=(\bA-\bL\bC)\tilde\bx+\bG(\xi)-\bG(\text{sat}_\sigma(\hat\xi)).
\end{split}
\end{equation}

The following proposition gives the sufficient condition that $\tilde\bx = 0$ of system \eqref{eq.extended_system3} is exponentially stable.

\begin{proposition}\itshape
      For all $x\in \mathbb{R}^{n_x}$ and $\xi \in \Xi$, suppose that there exist a symmetric positive definite matrix $\bP\in \mathbb{R}^{(n_x+n_\xi)\times (n_x+n_\xi)}$, and  $\bN\in \mathbb{R}^{(n_x+n_\xi)\times n_y}$ satisfying
   \begin{equation}\label{eq.LMI_0}
\begin{split}
\begin{bmatrix}
   \bP\bA+\bA^\top \bP-\bN\bC-\bC^\top \bN^\top+ \begin{bmatrix}
       I&\\
       &0
   \end{bmatrix}        & \bP\\
            \bP& -\ell_\varphi^{-2} I
        \end{bmatrix}<0.
\end{split}
\end{equation}
  Then, the equilibrium point $\tilde\bx=0$ of system \eqref{eq.extended_system3} is exponentially stable.
\end{proposition}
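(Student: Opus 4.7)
I would prove the proposition by quadratic Lyapunov analysis, interpreting \eqref{eq.LMI_0} as the Schur-complement form of the dissipation inequality needed for exponential convergence. Choose $V(\tilde\bx)=\tilde\bx^\top\bP\tilde\bx$ and read off the observer gain as $\bL:=\bP^{-1}\bN$, so that
$\bP\bA+\bA^\top\bP-\bN\bC-\bC^\top\bN^\top=\bP(\bA-\bL\bC)+(\bA-\bL\bC)^\top\bP$.
Since the lower-right block $-\ell_\varphi^{-2}I$ of \eqref{eq.LMI_0} is negative definite, the Schur complement yields the equivalent strict inequality
\begin{equation*}
\bP(\bA-\bL\bC)+(\bA-\bL\bC)^\top\bP+\ell_\varphi^{2}\,\bP\bP+\begin{bmatrix} I & 0\\ 0 & 0\end{bmatrix}<0,
\end{equation*}
which is exactly what the bound on $\dot V$ will need to match.

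\textbf{Key Lipschitz estimate.} Let $\tilde\xi$ denote the first $n_\xi$ components of $\tilde\bx$ and set $\Delta\bG:=\bG(\xi)-\bG(\text{sat}_\sigma(\hat\xi))$. The point of feeding the observer with $\text{sat}_\sigma(\hat\xi)$ rather than $\hat\xi$ is that both arguments of $\bG$ then lie inside $\bm\bar\Xi$: indeed $\xi\in\Xi\subset\bm\bar\Xi$ implies $\text{sat}_\sigma(\xi)=\xi$, while by construction $\text{sat}_\sigma(\hat\xi)\in\bm\bar\Xi$ for every $\hat\xi\in\mathbb{R}^{n_\xi}$. Combining \eqref{eq.lipschitz_g} with the componentwise non-expansiveness of the saturation gives
\begin{equation*}
\|\Delta\bG\|\;\le\;\ell_\varphi\,\|\text{sat}_\sigma(\xi)-\text{sat}_\sigma(\hat\xi)\|\;\le\;\ell_\varphi\,\|\tilde\xi\|,
\end{equation*}
so that $\|\Delta\bG\|^{2}\le \ell_\varphi^{2}\,\tilde\bx^\top\!\begin{bmatrix} I & 0\\ 0 & 0\end{bmatrix}\!\tilde\bx$.

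\textbf{Dissipation estimate.} Differentiating $V$ along \eqref{eq.extended_system3} produces
\begin{equation*}
\dot V=\tilde\bx^\top\bigl[\bP(\bA-\bL\bC)+(\bA-\bL\bC)^\top\bP\bigr]\tilde\bx+2\,\tilde\bx^\top\bP\,\Delta\bG.
\end{equation*}
Apply Young's inequality in the form $2(\bP\tilde\bx)^\top\Delta\bG\le \ell_\varphi^{2}\,\tilde\bx^\top\bP\bP\,\tilde\bx+\ell_\varphi^{-2}\|\Delta\bG\|^{2}$, and insert the Lipschitz bound above to obtain
\begin{equation*}
\dot V\;\le\;\tilde\bx^\top\!\left[\bP(\bA-\bL\bC)+(\bA-\bL\bC)^\top\bP+\ell_\varphi^{2}\bP\bP+\begin{bmatrix} I & 0\\ 0 & 0\end{bmatrix}\right]\!\tilde\bx.
\end{equation*}
The bracketed matrix is exactly the Schur complement of \eqref{eq.LMI_0} and is therefore negative definite, so $\dot V\le -\epsilon\|\tilde\bx\|^{2}$ for some $\epsilon>0$. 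Comparing with $V\le \rho_{\max}(\bP)\|\tilde\bx\|^{2}$ yields $V(t)\le V(0)\exp\!\bigl(-\epsilon t/\rho_{\max}(\bP)\bigr)$, and exponential stability of $\tilde\bx=0$ follows.

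\textbf{Main obstacle.} The Lyapunov manipulation is routine once the Lipschitz bound is global in $\hat\xi$; the non-trivial ingredient is precisely the saturation trick, without which \eqref{eq.lipschitz_g} would only be valid on $\bm\bar\Xi$ and the argument would degenerate to a purely local statement dependent on the initial observer error. Identifying $\bN=\bP\bL$ and recognizing that the weight $\text{diag}(I,0)$ appearing in \eqref{eq.LMI_0} is tailored to absorb the $\ell_\varphi^{-2}\|\Delta\bG\|^{2}$ term are the two observations that make the Schur-complement reformulation tight.
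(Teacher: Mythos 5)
Your proposal is correct and follows essentially the same route as the paper's proof: the quadratic Lyapunov function $V=\tilde\bx^\top\bP\tilde\bx$ with $\bL=\bP^{-1}\bN$, the Lipschitz-plus-Young bound on the cross term $2\tilde\bx^\top\bP(\bG(\xi)-\bG(\mathrm{sat}_\sigma(\hat\xi)))$, and the Schur complement of \eqref{eq.LMI_0} to obtain the negative definite bound on $\dot V$. If anything, your write-up is slightly more careful than the paper's, since you justify explicitly why $\|\bG(\xi)-\bG(\mathrm{sat}_\sigma(\hat\xi))\|\le\ell_\varphi\|\tilde\xi\|$ via $\mathrm{sat}_\sigma(\xi)=\xi$ on $\Xi$ and the non-expansiveness of the saturation, a step the paper leaves implicit.
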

\begin{proof}
Define Lyapunov function $V = \tilde\bx^\top \bP \tilde\bx$.
We have
\begin{equation}\label{eq.d_V}
\begin{split}
 \dot{V}&=\tilde\bx^\top (\bP(\bA-\bL\bC)+(\bA-\bL\bC)^\top \bP)\tilde\bx\\
 &~~~+2\tilde\bx^\top \bP (\bG( \xi)-\bG(\text{sat}_\sigma( \hat\xi))).\\
 \end{split}
\end{equation}

 Since $ \text{sat}_\sigma( \hat\xi),\xi\in \bm\bar\Xi $, according to \eqref{eq.lipschitz_g} and Young’s inequality, we have

\begin{equation*}
\begin{split}
2\tilde\bx^\top \bP(\bG( \xi)-\bG( \text{sat}_\sigma( \hat\xi)))&\leq 2\ell_\varphi  \| \bP \tilde\bx\|\| \tilde \xi \|\\
&\leq \ell_\varphi ^2 \bx^\top \bP\bP\bx + \tilde \xi^\top \tilde \xi.
\end{split}
\end{equation*}

Then \eqref{eq.d_V} becomes
\begin{equation*}
\begin{split}
 \dot{V}&\leq \tilde\bx^\top (\bP(\bA-\bL\bC)+(\bA-\bL\bC)^\top \bP+\ell_\varphi ^2  \bP\bP)\tilde\bx+\tilde \xi^\top \tilde \xi \\
    =    &       \tilde\bx^\top\left[{ \bP(\bA-\bL\bC)+(\bA-\bL\bC)^\top \bP+\ell_\varphi ^2\bP\bP+   \begin{bmatrix}
                I&\\
                &0  \\
            \end{bmatrix} }\right]
                \tilde\bx.\\
\end{split}
\end{equation*}

  Applying the Schur complement to inequality \eqref{eq.LMI_0} with  $Y = \bP\bL $, we obtain
\begin{equation}\label{eq.LMI}
\begin{split}
 { \bP(\bA-\bL\bC)+(\bA-\bL\bC)^\top \bP+\ell_\varphi ^2\bP\bP+   \begin{bmatrix}
                I&\\
                &0  \\
            \end{bmatrix} }  <0,\\
\end{split}
\end{equation}
 implying that there exist some negative definite matrix $Q$ such that
\begin{equation}\label{eq.V_negative}
\begin{split}
\dot V \leq \tilde\bx^\top Q \tilde\bx.
\end{split}
\end{equation}
Recalling that a positive definite quadratic function $\tilde\bx^\top \bP \tilde\bx$ satisfies
\begin{equation}\label{eq.V_posi}
\begin{split}
 \rho_\text{min}(\bP)\tilde\bx^\top \tilde\bx\leq \tilde\bx^\top \bP \tilde\bx\leq \rho_\text{max}(\bP)\tilde\bx^\top \tilde\bx.
\end{split}
\end{equation}

Combining \eqref{eq.V_negative} and \eqref{eq.V_posi}, we can conclude that $\tilde \bx = 0$ of \eqref{eq.extended_system3} is exponentially stable \cite[Theorem 4.10]{khalil2002nonlinear}.

\end{proof}

\subsection{Sufficient Conditions for the Existence of $\bP$ and $\bN$}
In the above subsection, we suppose that there exist $\bP$ and $\bN$. Nevertheless, this may not hold true because $\varphi$ may be too large to ensure \eqref{eq.LMI}.

 In the following, we will give a sufficient condition such that there exist $\bP$ and $\bN$ such that LMI \eqref{eq.LMI_0} holds. Before doing so, a definition and two useful lemmas are given for system
  \begin{equation}\label{eq.system_linear_unobservability}
\begin{split}
  \dot{x}&=A{x}+\phi(x)\\
  y  &= C{x}
\end{split}
 \end{equation}
 with
 \begin{equation}\label{eq.lipschitz_g_2}
\begin{split}
  \|\phi(x_1)-\phi(x_2)\|\leq  \ell_\phi  \|x_1-x_2\|,~~\forall x_1, x_2 \in X.
 \end{split}
\end{equation}

  \begin{definition}\itshape\cite{1998Existence} \textbf{Distance to unobservability:}
  The distance to unobservability of the pair $(A,C)$ for $A\in \mathbb{R}^{n\times n}$ and $C\in \mathbb{R}^{p\times n}$, is defined as the magnitude of the smallest perturbation $(E,F)\in \mathbb{C}^{n\times n}\times \mathbb{C}^{n\times n}$ that makes the pair $(A+E,C+F)$ unobservable. Specifically, this is the quantity
\begin{equation*}
    \delta(A,C) = \inf_{(A+E,C+F) \text{ unobservable}}\|(E,F)\|_2.
\end{equation*}
\end{definition}

 Now, the following lemma provides a method to compute the distance to unobservability.

\begin{lemma}\itshape\cite{1998Existence}
      The distance to unobservability of the pair $(A,C)$ is the minimum with respect to $w$ of the smallest singular value of
    \begin{equation*}
        \begin{bmatrix}
            jwI-A\\
            C
        \end{bmatrix}, w\in R
    \end{equation*}
i.e.,
    \begin{equation*}
        \delta(A,C) = \min_w \varrho_{\text{min}}\begin{bmatrix}
             jwI-A\\
            C
        \end{bmatrix}.
    \end{equation*}
\end{lemma}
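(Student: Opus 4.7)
The plan is to establish the equality by two complementary inequalities, mirroring the classical Eising-type argument for the distance to uncontrollability, adapted to the dual (unobservability) setting.

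For the upper bound $\delta(A,C)\leq \min_w \varrho_{\min}\begin{bmatrix} jwI-A\\ C\end{bmatrix}$, I would first let $w^\star$ achieve the outer minimum with corresponding value $s^\star$. By the singular value decomposition, there exist unit vectors $v\in\mathbb{C}^n$ and $u=\col(u_1,u_2)\in\mathbb{C}^{n+p}$ such that
\begin{equation*}
(jw^\star I-A)v = s^\star u_1,\qquad Cv = s^\star u_2.
\end{equation*}
Then I would explicitly construct the perturbation $E\defeq s^\star u_1 v^\ast$ and $F\defeq -s^\star u_2 v^\ast$. A direct calculation gives $(jw^\star I-(A+E))v=0$ and $(C+F)v=0$, so $jw^\star$ is an unobservable mode of $(A+E,C+F)$. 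Since $u$ and $v$ are unit vectors, $\|(E,F)\|_2 = s^\star$, which delivers the desired inequality.

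For the lower bound $\delta(A,C)\geq \min_w \varrho_{\min}\begin{bmatrix} jwI-A\\ C\end{bmatrix}$, I would take any admissible perturbation $(E,F)$ that renders $(A+E,C+F)$ unobservable. By the Popov-Belevitch-Hautus test, there exists some $\lambda\in\mathbb{C}$ and a nonzero $v$ such that
\begin{equation*}
\begin{bmatrix} \lambda I-A\\ C\end{bmatrix} v \;=\; \begin{bmatrix} E\\ -F\end{bmatrix} v.
\end{equation*}
Dividing by $\|v\|$ and using the variational characterization of the smallest singular value yields
\begin{equation*}
\varrho_{\min}\begin{bmatrix} \lambda I-A\\ C\end{bmatrix}\leq \|(E,F)\|_2,
\end{equation*}
and minimizing the left-hand side over $\lambda$ then taking the infimum of the right-hand side over all unobservability-inducing $(E,F)$ produces the bound.

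The main obstacle, and the step that will require the most care, is reconciling the fact that the lower-bound argument naturally produces a minimum over all $\lambda\in\mathbb{C}$, whereas the lemma restricts the outer minimization to $\lambda=jw$ with $w\in\mathbb{R}$. To close this gap I would invoke the sub-harmonicity of $\lambda \mapsto \varrho_{\min}\begin{bmatrix} \lambda I - A\\ C\end{bmatrix}$ (viewed as a function on $\mathbb{C}$), together with the fact that this function grows like $|\lambda|$ as $|\lambda|\to\infty$, to argue by a maximum-principle-type reasoning that the infimum can be taken on a bounded contour; combining this with the structural symmetry exploited in the upper-bound construction (where $v^\ast$ is used to build a rank-one perturbation attaining the bound on the imaginary axis) allows the restriction to $\lambda = jw$. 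Finally, continuity of $\varrho_{\min}$ in $w$ and coercivity as $|w|\to\infty$ guarantee that the minimum over $w\in\mathbb{R}$ is attained, completing the proof.
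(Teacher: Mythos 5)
The paper does not prove this lemma --- it is quoted from the cited reference without argument --- so there is no in-paper proof to compare against; I will assess your attempt on its own terms. Your upper bound is the standard Eising construction and is correct: with $E=s^\star u_1v^\ast$ and $F=-s^\star u_2v^\ast$ one checks $(jw^\star I-(A+E))v=0$, $(C+F)v=0$, and $\left\|\left[\begin{smallmatrix}E\\ F\end{smallmatrix}\right]\right\|_2=s^\star$ because $\col(u_1,-u_2)$ and $v$ are unit vectors, so $\delta(A,C)\leq\min_{w\in\mathbb{R}}\varrho_{\min}\left[\begin{smallmatrix}jwI-A\\ C\end{smallmatrix}\right]$. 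Your lower bound via the PBH test is also sound, but, as you note, it only delivers $\delta(A,C)\geq\min_{\lambda\in\mathbb{C}}\varrho_{\min}\left[\begin{smallmatrix}\lambda I-A\\ C\end{smallmatrix}\right]$ with $\lambda$ ranging over all of $\mathbb{C}$.

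The step you flag as delicate is where the argument genuinely fails, and no maximum-principle device can repair it, because the two minima are different in general. Take $n=p=1$, $A=-10$, $C=\epsilon$ with $\epsilon$ small: then $\varrho_{\min}\left[\begin{smallmatrix}\lambda+10\\ \epsilon\end{smallmatrix}\right]=\sqrt{|\lambda+10|^2+\epsilon^2}$, whose minimum over $\lambda\in\mathbb{C}$ is $\epsilon$ (attained at $\lambda=-10$, i.e.\ perturbing $C$ to zero makes the pair unobservable), while its minimum over $\lambda=jw$, $w\in\mathbb{R}$, exceeds $10$. Hence $\delta(A,C)$ as defined by perturbations equals the minimum over all complex $\lambda$ --- which is exactly what your two bounds prove --- and is in general strictly smaller than the imaginary-axis minimum appearing in the statement. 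Your proposed bridge also runs the wrong way analytically: for an analytic pencil, $\log\varrho_{\min}$ obeys a \emph{minimum} principle (it is $-\log$ of a subharmonic function where defined), which drives minimizers toward the eigenvalue locations of $A$, not onto the imaginary axis. In short, your argument correctly proves Eising's theorem with $\min_{\lambda\in\mathbb{C}}$; the restriction to $w\in\mathbb{R}$, inherited from the cited reference, is not equivalent to the perturbation-theoretic definition given in the paper and cannot be obtained by the route you sketch (the imaginary-axis quantity is the one that actually matters for the LMI feasibility used later, but identifying it with $\delta(A,C)$ requires a hypothesis neither the lemma nor your proof supplies).
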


In the next Lemma we connect the distance to unobservability with the feasibility of the  LMI \eqref{eq.LMI_0}.

\begin{lemma}\itshape\label{lem.distance}\cite{1998Existence}
    If the distance to unobservability of the pair $(A,C)$ is larger than the
Lipschitz constant $\ell_\phi $ of the nonlinear function   and $A$ is stable, there exists a symmetric
positive definite matrix $P$ and a matrix $L$ such that
\begin{equation*}
\begin{split}
       & P({A}-L{C})+({A}-{L}{C})^\top P+\ell_\phi^2PP+I<0. \\
\end{split}
 \end{equation*}
\end{lemma}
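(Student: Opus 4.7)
The plan is to translate the matrix inequality into an $H_\infty$ output-injection problem and then invoke the characterization of its optimal cost through the distance to unobservability established in Lemma~2.

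First, I would apply a Schur complement to the quadratic term $\ell_\phi^2 PP$, rewriting the target inequality in the equivalent block form
\[
\begin{bmatrix}
P(A-LC)+(A-LC)^\top P+I & \ell_\phi P \\
\ell_\phi P & -I
\end{bmatrix}<0.
\]
By the Bounded Real Lemma, this matrix inequality admits a solution $(P,L)$ with $P>0$ if and only if one can choose $L$ such that $A-LC$ is Hurwitz and the transfer function $(sI-(A-LC))^{-1}$ has $H_\infty$ norm strictly smaller than $1/\ell_\phi$. This reduces the lemma to an $H_\infty$ filtering/observer synthesis question.

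Next, I would exploit the identification $\delta(A,C)=\min_\omega \varrho_{\min}\!\begin{bmatrix} j\omega I-A \\ C\end{bmatrix}$ from Lemma~2 together with the central result of \cite{1998Existence}, which shows that the reciprocal of this distance coincides with the infimum of $\|(sI-(A-LC))^{-1}\|_{\infty}$ taken over all output-injection gains $L$ that render $A-LC$ Hurwitz. The hypothesis that $A$ itself is stable guarantees that this infimum is taken over a nonempty set (which already contains $L=0$), and hence can be approached by a Hurwitz choice of $L$. Since $\delta(A,C)>\ell_\phi$ by assumption, strict inequality holds: some $L$ renders $A-LC$ Hurwitz while achieving $\|(sI-(A-LC))^{-1}\|_{\infty}<1/\ell_\phi$. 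Reversing the Schur/BRL step then produces a symmetric positive definite $P$ satisfying the desired inequality.

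The substantive obstacle is the identity $\inf_L\|(sI-(A-LC))^{-1}\|_{\infty}=1/\delta(A,C)$, which is the nontrivial content inherited from \cite{1998Existence}: establishing it requires either the $H_\infty$ filtering Riccati machinery or a Kenney--Laub style perturbation argument, neither of which is short. Everything else — the Schur reformulation, the Bounded Real Lemma, and the verification that stability of $A$ allows one to keep the infimizing sequence inside the set of Hurwitz closed-loop dynamics — is routine.
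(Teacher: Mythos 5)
The paper never proves this lemma: it is imported verbatim from the cited reference, so there is no in-text argument to compare yours against line by line. On its own terms, the first two steps of your proposal are correct and standard. The Schur complement of the $\ell_\phi^2PP$ term and the strict Bounded Real Lemma do reduce the matrix inequality to the existence of a gain $L$ such that $A-LC$ is Hurwitz and $\|(sI-(A-LC))^{-1}\|_{\infty}<1/\ell_\phi$; this is exactly the reformulation under which the result is usually discussed.

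The gap is the third step. The identity $\inf_L\|(sI-(A-LC))^{-1}\|_{\infty}=1/\delta(A,C)$ is asserted, not proved, and it does not follow from Lemma~2 of the paper, which merely computes $\delta(A,C)$ as $\min_w\varrho_{\min}$ of the PBH matrix and says nothing about what is achievable by output injection. Neither direction of the asserted identity is elementary: factoring $jwI-(A-LC)$ as $[\,I\;\;L\,]$ times the stacked PBH matrix only gives $\varrho_{\min}(jwI-(A-LC))\le\sqrt{1+\|L\|^2}\;\delta(A,C)$, which degenerates as $\|L\|$ grows, so even the ``you cannot beat $\delta$'' direction is not free, and the achievability direction is precisely the substantive content of the lemma. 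In effect your proposal restates the lemma in $H_\infty$ language and then cites it, which is no more of a proof than the paper's own citation. I would also caution that the corresponding claim in the 1998 reference is known to be delicate --- the distance-to-unobservability sufficient condition was later challenged by a counterexample in the literature (Aboky, Sallet and Vivalda, 2002) --- so a complete argument requires either a careful reconstruction of the Riccati-based proof in the reference or an independent derivation; the Schur/Bounded-Real-Lemma bookkeeping alone does not close it.
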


  Finally, the sufficient condition that there exist $\bP$ and $\bN$ for LMI \eqref{eq.LMI_0} is given in the following proposition.
\begin{proposition}\itshape\label{pro.suffi_LMI}
    If the distance to unobservability of the pair $(\bA,\bC)$ is larger than the
 Lipschitz constant $\varphi$, then there exist $\bP$ and $\bN$ such that LMI \eqref{eq.LMI_0} holds.
\end{proposition}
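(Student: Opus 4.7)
The plan is to apply Lemma~\ref{lem.distance} directly to the extended pair $(\bA,\bC)$ with Lipschitz constant $\ell_\varphi$. A preliminary observation that I would record is that the nonlinearity $\bG$ of the extended system $\Sigma_{\bP}$, viewed as a map of the joint state $\bx=\col(\xi,x)$, inherits the same Lipschitz constant $\ell_\varphi$ as $\varphi$, because $\bG(\bx)$ depends only on the $\xi$-block of $\bx$; this is the point already noted right after the definition of $\bG$ in the excerpt.

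First, I would invoke Lemma~\ref{lem.distance} with the substitutions $A\leftarrow\bA$, $C\leftarrow\bC$, $\ell_\phi\leftarrow\ell_\varphi$. Since by hypothesis the distance to unobservability of $(\bA,\bC)$ exceeds $\ell_\varphi$, the lemma delivers a symmetric positive-definite $\bP$ and a gain $\bL$ such that
\begin{equation*}
  \bP(\bA-\bL\bC)+(\bA-\bL\bC)^\top \bP + \ell_\varphi^2 \bP\bP + I < 0.
\end{equation*}
Next, since the full identity dominates the block identity in the sense that $I \succeq \begin{bmatrix} I & 0 \\ 0 & 0 \end{bmatrix}$, the strict inequality above implies
\begin{equation*}
  \bP(\bA-\bL\bC)+(\bA-\bL\bC)^\top \bP + \ell_\varphi^2 \bP\bP + \begin{bmatrix} I & 0 \\ 0 & 0 \end{bmatrix} < 0,
\end{equation*}
which is precisely the condition \eqref{eq.LMI} obtained in the proof of Proposition~1. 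Finally, I would set $\bN:=\bP\bL$ and reverse the Schur complement step carried out there (using the block $-\ell_\varphi^{-2} I\prec 0$) to recover \eqref{eq.LMI_0}.

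The main delicate point in this programme is the applicability of Lemma~\ref{lem.distance} itself. The extended dynamics matrix $\bA=\diag(\Phi,A)$ is generically not Hurwitz: the chaotic generator produces bounded but non-decaying trajectories, and no stability assumption is imposed on the plant matrix $A$. This might seem to clash with the ``$A$ is stable'' clause in the statement of the lemma, but the role of that clause in \citep{1998Existence} is to ensure existence of a stabilizing output injection, and a positive distance to unobservability already guarantees detectability of $(\bA,\bC)$; the gain $\bL$ furnished by the construction is exactly what renders $\bA-\bL\bC$ Hurwitz with the Lyapunov margin needed to absorb the $\ell_\varphi^2\bP\bP$ term. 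Carefully reconciling this with the hypotheses as stated, or equivalently invoking the variant of the result in which ``stable $A$'' is replaced by ``detectable $(A,C)$'', is the step I expect to require the most scrutiny.
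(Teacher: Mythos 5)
Your proof follows essentially the same route as the paper's own (very brief) argument: invoke Lemma~\ref{lem.distance} for the pair $(\bA,\bC)$ with Lipschitz constant $\ell_\varphi$, weaken the full identity to the block identity to pass from $\bP(\bA-\bL\bC)+(\bA-\bL\bC)^\top\bP+\ell_\varphi^2\bP\bP+I<0$ to \eqref{eq.LMI}, and recover \eqref{eq.LMI_0} by reversing the Schur complement with $\bN=\bP\bL$. The delicate point you flag about the ``$A$ is stable'' clause is genuine but applies equally to the paper, which silently invokes the lemma for the non-Hurwitz matrix $\bA=\diag(\Phi,A)$; your proposed resolution---that a positive distance to unobservability already yields observability of $(\bA,\bC)$, so the detectability variant of the result applies and $\bL$ itself supplies the stabilization---is the correct way to close that gap.
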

 The proof of this Proposition is done by applying Lemma \ref{lem.distance} and the fact that the existence of $\bP$ and $\bL$ for inequality
\begin{equation*}
     \bP(\bA-\bL\bC)+(\bA-\bL\bC)^\top \bP+\ell_\varphi^2\bP\bP+I<0
\end{equation*}
is sufficient for the existence of $\bP$ and $\bL$ satisfying inequality \eqref{eq.LMI}, which is equivalent to the existence of $\bP$ and $\bN$ of LMI \eqref{eq.LMI_0}.

 \section{Defense of Eavesdropping Attack}
\subsection{Eavesdropping Attack}
 The objective of eavesdropping attack is to access the plant states by eavesdropping the transmitted data. In this section, we assume that an eavesdropper has  access to the communication channel and has full information of the system model, i.e., $\{A,B,C\}$, but no access to the mask \eqref{eq.chaotic_system}.

 It is clear that the attacker can successfully eavesdrop the states of the plant, i.e., $\lim_{t\rightarrow \infty}|{x}-\hat{x}^a|=0$, by the following estimator
  \begin{equation}\label{eq.lin_system_esti_a}
\Sigma_\text{eav}:
\; \left\{\begin{split}
  \dot{\hat{x}}^a&=A{\hat{x}^a}+Bu+\bar L(y-\hat y^a) \\
  \hat y^a  &= C{\hat{x}^a}
\end{split}
\right.
\end{equation}
where $\bar L$ can be any matrix such that $A-\bar L C$ is Hurwitz.

\subsection{Eavesdropping Error Caused by Chaotic Masking}

In this section we aim at showing that in the presence of the proposed chaotic masking protocol, the eavesdroppers estimator $\Sigma_\text{eav}$ would fail to estimate the states of the plant. Since the outputs of a chaotic system cannot be synchronized without the full
knowledge of its specific class and associated internal parameters \cite{2020Resilient}, the chaotic signal added on the sensor measurement is seen as bounded disturbance. Specifically, with chaotic masking, $\Sigma_\text{eav}$ becomes
  \begin{equation}\label{eq.lin_system_esti_a20}
\begin{split}
  \dot{\hat{x}}^a&=A{\hat{x}^a}+Bu+L(y+d-\hat y^a) \\
  \hat y^a  &= C{\hat{x}^a}
\end{split}
\end{equation}
where $\|d\|\leq \varepsilon$ for some $\varepsilon$.

 Then the estimation error dynamic is
 \begin{equation*}
\begin{split}
  \dot{ e }&=(A-LC) e-Ld.
\end{split}
\end{equation*}
Since $d$ is non-vanishing, it is clear that $e (t)$ would not converge to zero.
Specifically, according to  \cite[Corollary 5.2]{khalil2002nonlinear}, the attacker can only guarantee that the estimation error
 \begin{equation*}
\begin{split}
  \|{ e (t)}\|&\leq \frac{2\lambda_\text{max}^2(S)\|L\|}{\lambda_\text{min}(S)}\varepsilon
\end{split}
\end{equation*}
where $S$ is the solution of the Lyapunov equation
$$
S(A-LC)+(A-LC)^{T}S = -I.
$$
Hence  the estimator \eqref{eq.lin_system_esti_a20} would lose its efficacy with a relatively large $\varepsilon$. In other words, we can increase the privacy level of the system by designing chaotic signals such that the amplitude of $d$ is large.

\section{Defense of Replay Attack}

\subsection{Replay Attack}

 The considered replay attack strategy  is given as follows.
 $  \Sigma_\text{replay}: $
\begin{enumerate}
  \item  From time $-\tau$ to time $0$, the attacker records a sequence of sensor measurements $y$.
  \item From time $0$ to time $\tau$, the attacker replays the recorded sensor measurements to tamper the true sensor outputs, i.e.,
\begin{equation}\label{eq.y'}
y^a(t) = y(t-\tau),\quad 0\leq t\leq \tau.
\end{equation}
 Meanwhile, arbitrary harmful attack sequence is injected into the control channel to destroy the plant.
\end{enumerate}

  For simplicity, suppose that $u = -K\hat x$ with $A-BK$ Hurwitz. In addition, here we assume that the control system has run for a long time (already achieved a steady state) before it is attacked, i.e.,  the system and the observer have already reached their steady states before time $-\tau$.

The following Proposition shows the fact that the replay attack is unable to be detected without masking.

\begin{proposition}\itshape\label{eq.pro_replay_without_masking}
 Consider system $\Sigma_p$ and estimator $\Sigma_{e}$, that is, the case without any security mechanism.   Assume that the system and the observer have already reached their steady state at a time $t=-\tau$ for some positive constant $\tau>0$, that is, $x(-\tau)=\hat{x}(-\tau)$. Moreover, assume that $ A-BK-LC $ is Hurwitz. Then, the innovation term $z:= y-\hat{y}$ satisfies $z(t)=0, \forall t\geq0$ under the replay attack $\Sigma_\text{replay}$.
\end{proposition}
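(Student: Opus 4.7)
The plan is to show that under the stated hypotheses all relevant signals vanish on the replay window, making the innovation trivially zero. I would split the time axis into the pre-attack interval $[-\tau,0]$ and the replay interval $[0,\tau]$, and exploit the Hurwitz closed loop $A-BK$ together with the observer-side Hurwitz matrix $A-BK-LC$.

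For the pre-attack interval, I first note that in the absence of attack and under the feedback $u=-K\hat x$, the error $e=x-\hat x$ obeys $\dot e=(A-LC)e$. The hypothesis $x(-\tau)=\hat x(-\tau)$ then yields $e(t)\equiv 0$ and in particular $\hat x\equiv x$ on $[-\tau,0]$, so both satisfy $\dot x=(A-BK)x$. Reading ``already reached steady state'' in its standard sense, namely that the Hurwitz closed loop has converged to its unique equilibrium, I pin $x(-\tau)=0$. Consequently $x(t)=\hat x(t)=0$ and $y(t)=0$ throughout $[-\tau,0]$, so the measurements stored by the attacker are identically zero.

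For the replay interval I would feed $y^{a}(t)=y(t-\tau)=0$ into the estimator \eqref{eq.system_linear_obsv_ori}. Plugging in $y^{a}\equiv 0$ and $u=-K\hat x$ collapses the observer to the autonomous equation $\dot{\hat x}=(A-BK-LC)\hat x$ with initial state $\hat x(0)=0$ inherited from Step~1. Uniqueness of solutions then gives $\hat x(t)\equiv 0$, hence $z(t)=y^{a}(t)-C\hat x(t)\equiv 0$ for every $t\in[0,\tau]$. The identity extends to any $t\geq 0$ for which the attacker continues to replay a signal consistent with the equilibrium.

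The main obstacle is interpretative rather than computational. If one reads ``steady state'' merely as $x(-\tau)=\hat x(-\tau)$ with both possibly nonzero, a transient of the form $-C e^{(A-BK-LC)t}\bigl(e^{(A-BK)\tau}-I\bigr)x(-\tau)$ enters $z$, so the exact identity $z\equiv 0$ would have to be replaced by exponential convergence to zero. The Hurwitzness of $A-BK-LC$ assumed in the statement is exactly what drives this transient to zero under that weaker reading, while in my argument it only ensures ODE uniqueness; the genuinely necessary ingredient for $z(t)=0$ on the whole interval $t\geq 0$ is the stronger reading of the steady-state hypothesis, which invokes the Hurwitzness of $A-BK$ implicit in the setting.
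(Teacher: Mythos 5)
Your proof is correct, but it takes a different route from the paper's. You show that under the ``at equilibrium'' reading of steady state, the closed loop $\dot x=(A-BK)x$ with $A-BK$ Hurwitz forces $x\equiv\hat x\equiv 0$ and $y\equiv 0$ on $[-\tau,0]$, so the replayed signal is zero and the attacked observer, reduced to $\dot{\hat x}=(A-BK-LC)\hat x$ from $\hat x(0)=0$, stays at the origin; hence $z\equiv 0$ trivially. The paper instead never computes the trajectories explicitly: it subtracts the attacked observer from the \emph{time-shifted} healthy observer and obtains $\dot{\hat{x}}^\tau-\dot{\hat{x}}^a=(A-BK-LC)(\hat{x}^\tau-\hat{x}^a)+L(y^\tau-y^a)$, where the driving term vanishes by the replay identity $y^a=y^\tau$ and the initial difference $\hat{x}^\tau(0)-\hat{x}^a(0)$ vanishes because the steady state is an equilibrium, so $z^a=z^\tau\equiv 0$. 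The two arguments invoke the same equilibrium hypothesis, but at different points: yours uses it to pin the whole trajectory to the origin (and thus only needs Hurwitzness of $A-BK-LC$ for uniqueness of solutions, as you note), while the paper's uses it only to match the two initial conditions, which makes the comparison-of-shifted-trajectories argument survive for any equilibrium (e.g., a nonzero set point) and mirrors the structure reused in Proposition~\ref{eq.pro_replay_with_masking}, where the same subtraction leads to the contradiction that a chaotic trajectory would have to be $\tau$-periodic. Your closing caveat about the weaker reading of ``steady state'' is apt and applies equally to the paper's own proof, which explicitly asserts that the linear plant sits at an equilibrium point in steady state.
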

\begin{proof}
  The system from time $-\tau$ to zero can be described in the following form
\begin{equation*}
\begin{split}
  \dot{x}^\tau&=A{x}^\tau+Bu^\tau\\
  y^\tau &= C{x}^\tau.
\end{split}
\end{equation*}
The steady-state observer is
\begin{equation}\label{eq.observer_steady}
\begin{split}
\dot{\hat{x}}^\tau &=A\hat{x}^\tau+Bu^\tau+Lz^\tau\\
\end{split}
\end{equation}
with $z^\tau = y^\tau-C\hat x^\tau$ and $u^\tau= -K\hat x^\tau$.

When the observer is a under attack ($0\leq t \leq \tau$),
it becomes
\begin{equation}\label{eq.observer_attaked}
\begin{split}
\dot{\hat{x}}^a &=A\hat{x}^a+Bu^a +Lz^a\\
\end{split}
\end{equation}
with $z^a=y^a-C\hat x^a$ and $u^a = -K\hat x^a$.

 Subtracting \eqref{eq.observer_steady} with \eqref{eq.observer_attaked}, there is
\begin{equation*}
\begin{split}
\dot{\hat{x}}^\tau-\dot{\hat{x}}^a &=(A-BK-LC)(\hat{x}^\tau-\hat{x}^a) +L (y^\tau-y^a).
\end{split}
\end{equation*}

 Since $y_k^\tau=y_k^a$ (by $\eqref{eq.y'}$) and $\Lambda:= A-BL-LC $ is Hurwitz, we have
\begin{equation*}
\begin{split}
\hat{x}^\tau-\hat{x}^a = e^{\Lambda t}(\hat{x}^\tau(0)-\hat{x}^a(0)).
\end{split}
\end{equation*}

Consequently, the innovation of the attacked system is
\begin{equation*}
\begin{split}
z^a&=y^a-C\hat x^a=y^\tau-C \hat x^\tau+Ce^{\Lambda t}(\hat{x}^\tau(0)-\hat{x}^a(0)).
\end{split}
\end{equation*}

  Since the observer has already reached its steady state before $-\tau$, we have $y^\tau-C \hat x^\tau = 0$. In addition, since the system $\Sigma_p$ is linear, it stays at an equilibrium point in steady state. Consequently, we have
 $\hat{x}^\tau(0)=\hat{x}^a(0)$.
Further considering the fact that $ \Lambda$ is Hurwitz, $e^{\Lambda t}$ is bounded, and hence $e^{\Lambda t}(\hat{x}^\tau(0)-\hat{x}^a(0))=0$.
As a result, the innovation under replay attack equals to $0$ and the attack cannot be detected.

\end{proof}

\subsection{Invalidating the Replay Attack Strategy}
 In the above subsection, the replay attack strategy relies on the fact that the  physical plant stays at an equilibrium point in steady state. With chaotic masking, the extended system $\Sigma_{\bP}$ contains chaotic dynamics and has a more complex steady-state behavior, which is no longer an equilibrium, periodic oscillation, or almost-periodic oscillation. Consequently, any replay attack strategy will increase, at least during a transient time, the innovation term $\bz:= \by-\hat{\by}$. This fact is formalized in the next proposition.

  \begin{proposition}\label{eq.pro_replay_with_masking}
 {\itshape Consider the extended system $\Sigma_{\bP}$ and the extended estimator $\Sigma_{\bE}$, that is, the case with chaotic masking. Assume that the estimation error of the extended system and the extended observer has reached its steady state before time $-\tau$, that is,} $ \bx(-\tau)=\hat{\bx}(-\tau)$. {\itshape Consider the replay attack $\Sigma_\text{replay}$, then, the innovation} $\bz:= \by-\hat{\by}$ {\itshape cannot be zero for all $t$ such that $0\leq t \leq \tau$.}
\end{proposition}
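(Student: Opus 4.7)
The plan is a contradiction argument: assume $\bz^a(t)=0$ for all $t\in[0,\tau]$ and use the chaotic (hence non--$\tau$-periodic) character of $\xi$ to derive a contradiction. The key preliminary is that the steady-state hypothesis $\bx(-\tau)=\hat\bx(-\tau)$ propagates forward up to the attack. Indeed, $\tilde\bx=0$ is an invariant solution of \eqref{eq.extended_system3}, because $\hat\xi=\xi\in\Xi$ makes the saturation inactive and cancels the nonlinear term; combined with the exponential stability of $\tilde\bx=0$ already established for $\Sigma_{\bE}$, uniqueness yields $\hat\bx(t)=\bx(t)$ for all $t\in[-\tau,0]$. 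In particular $\hat\bx^a(0)=\bx(0)$ and $\hat\xi^a(0)=\xi(0)\in\Xi$.

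Under the contradiction hypothesis the innovation term disappears and $\Sigma_{\bE}$ degenerates into its autonomous dynamics. The $\xi$-block becomes $\dot{\hat\xi}^a=\Phi\hat\xi^a+\varphi(\text{sat}_\sigma(\hat\xi^a))$, and forward invariance of $\Xi$ together with uniqueness of solutions of ODEs with Lipschitz right-hand side yields $\hat\xi^a(t)=\xi(t)$ for all $t\in[0,\tau]$. Expanding $\bz^a=\by(t-\tau)-\bC\hat\bx^a(t)=0$ with $\bC=[\Lambda\;\;C]$ and substituting $\hat\xi^a=\xi$ produces
\[
\Lambda\bigl(\xi(t)-\xi(t-\tau)\bigr)=C\bigl(x(t-\tau)-\hat x^a(t)\bigr).
\]
Since $\bG$ acts only on the $\xi$-coordinate, the $x$-block of the observer evolves linearly; with $u=-Kx$ pre-attack and $u^a=-K\hat x^a$ during the attack, the error $e_x:=x(t-\tau)-\hat x^a$ satisfies $\dot e_x=(A-BK)e_x$ with $e_x(0)=x(-\tau)-x(0)$. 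Hence the right-hand side of the identity above reduces to the analytic function $Ce^{(A-BK)t}\bigl(x(-\tau)-x(0)\bigr)$.

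The contradiction is then extracted from the chaotic nature of the left-hand side. In the easiest sub-case, when the plant has effectively converged to the origin before the attack---as is implicit in the same long run-time setup used in Proposition~\ref{eq.pro_replay_without_masking} for the linear case---one has $x(-\tau)\approx x(0)\approx 0$, the right-hand side vanishes and the identity collapses to $\Lambda\xi(t)=\Lambda\xi(t-\tau)$ on $[0,\tau]$, i.e.\ the masking signal would be $\tau$-periodic, which is incompatible with the aperiodicity of the chaotic orbit. In the general case, both sides are analytic in $t$ (the polynomial R\"ossler-4 generator has an analytic flow), so the identity extends from $[0,\tau]$ to all of $\RR$; the right-hand side then blows up as $t\to-\infty$ because $A-BK$ is Hurwitz, while the left-hand side stays bounded on the compact invariant set $\Xi$, again a contradiction. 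The main obstacle I anticipate is exactly this last point: converting ``$\xi$ is chaotic, hence $\Lambda\xi$ is not $\tau$-periodic'' into a rigorous step for an adversarially chosen $\tau>0$ essentially requires either the analyticity/unique-continuation route just sketched, or a structural non-resonance property of the specific chaotic generator being used.
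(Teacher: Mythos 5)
Your proposal follows the same overall strategy as the paper: assume $\bz^a\equiv 0$ on $[0,\tau]$, observe that the innovation then vanishes so the attacked observer coincides with the unattacked one (which in turn equals the true extended state by the steady-state hypothesis), turn the output identity into a statement relating the trajectory to its $\tau$-delay, and contradict the aperiodicity of the chaotic orbit. Where you diverge is the last step. The paper collapses everything to $\bC\bx^\tau(t)\equiv\bC\bx(t)$ and invokes ``instantaneous observability'' of the extended system to lift this to $\bx(t)\equiv\bx(t-\tau)$, which by uniqueness of solutions forces the closed-loop chaotic trajectory to be $\tau$-periodic --- contradiction. You instead keep the $\xi$- and $x$-blocks separate, reduce the right-hand side to the explicit analytic function $Ce^{(A-BK)t}\bigl(x(-\tau)-x(0)\bigr)$, and close the argument either by assuming the plant has settled at the origin (so $\Lambda\xi(t)=\Lambda\xi(t-\tau)$, contradicting aperiodicity) or by analytic continuation and backward blow-up. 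Your version is more explicit and buys a cleaner justification of the intermediate steps (you actually verify that $\tilde\bx=0$ is invariant, that the saturation is inactive, and that $\hat\xi^a=\xi$ by uniqueness, none of which the paper spells out), but your closing step is weaker in one respect: since $\Lambda$ is generally wide ($n_y<n_\xi$), the identity $\Lambda\xi(t)=\Lambda\xi(t-\tau)$ on an interval does not by itself imply $\xi(t)=\xi(t-\tau)$; you need observability of the chaotic subsystem through $\Lambda$ alone, whereas the paper only needs observability of the full pair $(\bA,\bC)$, which is already implicit in the LMI feasibility condition. Your analyticity branch additionally requires the backward orbit of $\xi$ to remain bounded, which forward invariance of $\Xi$ does not give. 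To be fair, the gap you flag honestly (``chaotic $\Rightarrow$ the output is not $\tau$-periodic on an interval'') is present in the paper too, hidden inside the unproved ``instantaneously observable'' assertion; neither argument is fully rigorous at that point, but the paper's formulation routes the observability requirement through a condition it has already imposed for the observer design.
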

\begin{proof}
 The extended system from $-\tau$ to zero can be described in the following form
\begin{equation*}
\begin{split}
 \dot{\bx}^{\tau}&=\bA\bx^{\tau}+\bB(-K x^{\tau})+\bG(\xi^{\tau})\\
  \by^{\tau}  &= \bC\bx^{\tau}
  \end{split}
\end{equation*}
The steady-state observer is
\begin{equation*}
\begin{split}
 \dot{\hat \bx}^\tau&=\bA\hat\bx^\tau+\bB(-K x^{\tau})+\bG(\text{sat}_\sigma(\hat\xi^{\tau}))+\bL \bz^\tau\\
  \hat{ \by}^\tau  &= \bC\hat\bx^\tau\\
   \end{split}
\end{equation*}
with $\bz^\tau =\by^\tau- \hat{ \by}^\tau$.

   When the observer is under replay attack ($0\leq t \leq \tau$), it becomes
\begin{equation}\label{eq.extend_observer_attaked}
\begin{split}
 \dot{\hat \bx}^a&=\bA\hat\bx^a+\bB(-Kx^{a})+\bG(\text{sat}_\sigma(\hat\xi^{a}))+\bL \bz^a\\
  \hat{ \by}^a  &= \bC\hat\bx^a\\
  \end{split}
 \end{equation}
with $\bz^a =\by^a- \hat{ \by}^a=\by^\tau- \hat{ \by}^a$.

The following proof is by contradiction.
 If $\bz^a\equiv 0$ for all $t$ such that $0\leq t \leq \tau$, the replay attack does not affect \eqref{eq.extend_observer_attaked} and we have $\hat\bx^a(t) =\hat\bx(t)$ for $0\leq t \leq \tau$.
Consequently,
\begin{equation*}
\begin{split}
\bz^a \equiv 0\Leftrightarrow  \by^\tau- \hat{ \by}^a\equiv 0 \Leftrightarrow \bC\bx^\tau \equiv \bC\hat\bx^a\Leftrightarrow \bC\bx^\tau \equiv \bC\hat\bx.
 \end{split}
 \end{equation*}

   Since the estimator has already reached steady state, we have $\hat\bx = \bx$, and hence we can deduce $\bC\bx^\tau(t) \equiv \bC\bx(t), 0\leq t \leq \tau$, implying (since $\bx$ is instantaneously observable)
\begin{equation*}
\bx^\tau(t) \equiv \bx(t),~~ 0\leq t \leq \tau,
 \end{equation*}
which contradicts with the fact that $\bx$ is chaotic, that is, non-periodic.

\end{proof}

\section{Defense of Stealthy Data Injection Attack}
 \subsection{Stealthy Data Injection Attack}
A stealthy data injection attacker injects attack sequence $\{a\}$ into the sensor-estimator channel to ruin the estimator $\Sigma_{e}$ while keeping stealthy. The adjective ``stealthy'' refers to the fact that the sequence $\{a\}$ can be designed to be undetected by common innovation-based detectors such as \eqref{eq.detector}.

   For simplicity, we suppose  $u = 0$.  To launch stealthy data injection attack, we suppose that the attack can get the system information $\{A,B,C,L\}$ and can inject arbitrary $\{a\}$ into the sensor measurement.
With stealthy data injection attack, the unprotected estimator $\Sigma_{e}$ becomes
\begin{equation}\label{eq.lin_system_esti_a2}
\begin{split}
  \dot{\hat{x}}^a&=A{\hat{x}}^a+L(y^a-\hat y^a) \\
  \hat y ^a &= C{\hat{x}}^a
\end{split}
\end{equation}
where $y^a = y+a$ with $a$ is the attack injected by adversaries. Define estimation difference $\Delta \hat{x} :={\hat{x}}^a-{\hat{x}} $ and innovation difference $\Delta z := z^a-z$. Comparing $\Sigma_{e}$ with \eqref{eq.lin_system_esti_a2}, there is
\begin{equation}\label{eq.Delta_lin_system_esti}
\begin{split}
  \Delta\dot{\hat{x}}&=(A-LC)\Delta{\hat{x}}+La  \\
  \Delta z &=-C\Delta{\hat{x}}+a
\end{split}
\end{equation}
 According to the triangular inequality $\| z^a\|\leq \|z\|+ \|\Delta z\|$, if $\|\Delta z\|$ is small, the detector cannot distinguish between $z^a$ and $z$ with high probability. Consequently, the detection rate of detector \eqref{eq.detector} is close to its false alarm rate. Hence the attack is assumed to be stealthy if
\begin{equation}\label{eq.stealthy}
\begin{split}
  \|\Delta z\|\leq M.
\end{split}
\end{equation}
Based on \eqref{eq.stealthy}, stealthy attack against remote estimator is designed as
\begin{equation}\label{eq.stealthy_attack}
\begin{split}
\Sigma_\text{FDI}: \;  a = C\Delta{\hat{x}}+ \varphi.
 \end{split}
 \end{equation}
It is noted that, without chaotic masking, $\Delta{\hat{x}}$ can be obtained by  \eqref{eq.Delta_lin_system_esti} with $\Delta{\hat{x}}(0)=0$. By choosing $\varphi$ as any time-varying functions satisfying $ \|\varphi \|\leq M$, $\Sigma_\text{FDI}$ is stealthy in the sense of \eqref{eq.stealthy}. In addition, by specific design of $\varphi$, one can regulate the estimation difference to some prescribed values \cite{chen2023optimal} or maximize it \cite{ning2023design}.

 \subsection{Destroying the Stealthiness of the Data Injection Attack}
In this subsection, we will show that, with chaotic masking, $\Sigma_\text{FDI}$ is no more stealthy.

 With chaotic masking, the extended estimator under stealthy attack becomes
  \begin{equation}\label{eq.extended_system_observer_a}
\begin{split}
 \dot{\hat \bx}^a&=\bA\hat\bx^a+\bG(\text{sat}_\sigma(\hat\xi^a))+\bL(\by^a- \hat{ \by}^a)\\
  \hat{ \by}^a  &= \bC\hat\bx^a\\
\end{split}
\end{equation}
Let $\bz:=\by- \hat{ \by}$, $\Delta \hat \bx:=\hat \bx^a-\hat \bx$ and $\Delta \bz := \bz^a-\bz $.
 According to $\Sigma_{\bE}$ and \eqref{eq.extended_system_observer_a}, we have
 \begin{equation}\label{eq.extended_system_Delta_Z}
\begin{split}
\Delta \dot {\hat \bx}&=(\bA-\bL \bC)\Delta\hat \bx+\bG(  \text{sat}_\sigma(\hat\xi^a))-\bG(\text{sat}_\sigma(\hat\xi))+\bL a\\
 \Delta \bz &= -\bC \Delta \hat \bx +a
 \end{split}
\end{equation}
Since the term $\bC \Delta \hat \bx$ contains both $\Delta {\hat{x}}$ and $\Delta {\hat\xi}$, it is unlikely to be offset by $\Delta {\hat{x}}$ in $\Sigma_\text{FDI}$. As a result, attack $\Sigma_\text{FDI}$ cannot guarantee that $\|\Delta \bz\| = \|-\bC \Delta \hat \bx+\Delta {\hat{x}}+\varphi\|\leq M$. In other words, $\Sigma_\text{FDI}$ is no more sealthy.

 In addition, since the system \eqref{eq.extended_system_Delta_Z} is input-to-output stable from $a$ to $\Delta \bz$ (this fact can be verified by  \cite[Theorem 5.1]{khalil2002nonlinear} by taking $V =  \Delta \hat \bx^\top \bP  \Delta \hat \bx $), there exist some $\kappa$ such that if $\|a\|\leq \kappa$, we have $\|\Delta \bz\| \leq M$. In other words,  $a$ may still be designed small enough to make sure the attack is stealthy. However, the supremum of $\kappa$ may be too small to launch any significant attacks, and,  it is hard for a potential attacker to estimate the supremum of $\kappa$ without any information of $\Phi$.

\section{Simulation Verification}
 The system used in the simulation is borrowed from \cite{franze2021resilient}, which is a model of an longitudinal axis motion of the $B747-100/200$ aircraft, whose states are the angle of altitude ($x(1)$), the pitch angle  ($x(2)$), the pitch rate ($x(3)$), and the true airspeed ($x(4)$), respectively.  The system matrix, input matrix, and output matrix are
\begin{equation*}
\begin{split}
&A=\begin{bmatrix}
             -0.5717& 0 &1.005 &-0.0006\\
        0& 0& 1 &0\\
 -0.1049 & 0 &-0.6803&  0.0002\\
-4.6726 &-9.7942 &-0.1463& -0.0062\\
        \end{bmatrix}.
\end{split}
\end{equation*}
\begin{equation*}
\begin{split}
&B=\begin{bmatrix}
            0& 0 &0\\
       0  &0 &0\\
-1.5539 &0.0154& -0.1556\\
0 &1.3287 &0.2000\\
        \end{bmatrix}, C=\begin{bmatrix}
        1 &0 &0& 1\\
      0 &0& 1& 1
        \end{bmatrix}.
\end{split}
\end{equation*}

 To mask the sensor measurements, we use the R$\ddot{o}$ssler prototype-4 system with
\begin{equation*}
\begin{split}
  \Lambda = \begin{bmatrix}
        1 &0 &0\\
      0 &2& 1
        \end{bmatrix} .
 \end{split}
\end{equation*}
For solving LMI \eqref{eq.LMI_0}, we test the sufficient condition in proposition \ref{pro.suffi_LMI}, firstly.
The constant $\ell_\varphi $ in \eqref{eq.lipschitz_g} is estimated as $\ell_\varphi  =2.5$ since
 \begin{equation}\label{eq.estimate_phi}
\begin{split}
&\left\|\frac{\partial \varphi(\xi)}{\xi}\right\| =\left\|\begin{bmatrix}
         0 &0&0\\
      0&0& 0\\
      0&y&0
        \end{bmatrix}\right\| \leq 2.5,
\end{split}
\end{equation}
according to Fig. \ref{fig.rossler} (b).
 \begin{figure}[h]
\centering
\subfigure[]{\includegraphics[width=0.4\columnwidth]{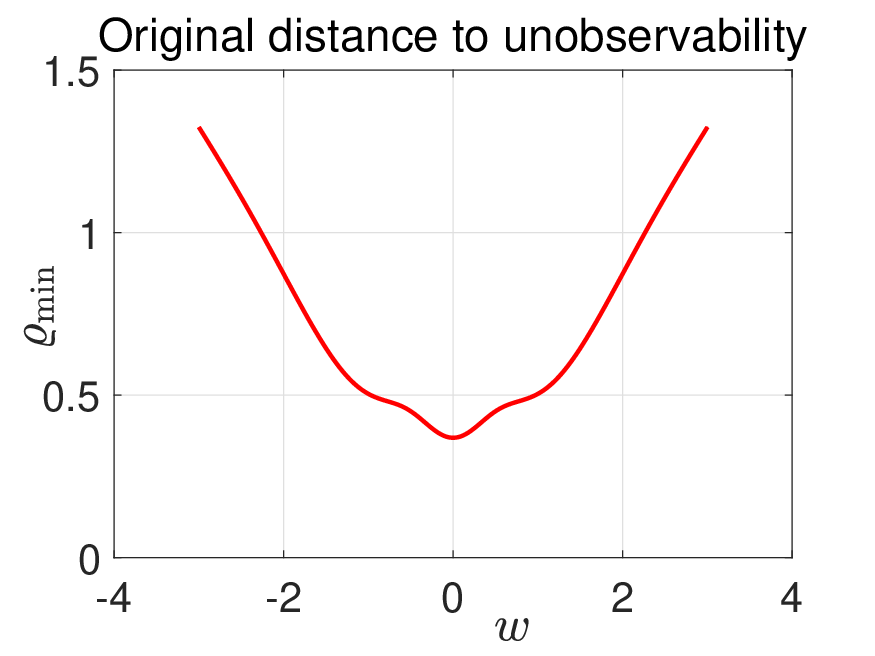}}
\subfigure[]{\includegraphics[width=0.4\columnwidth]{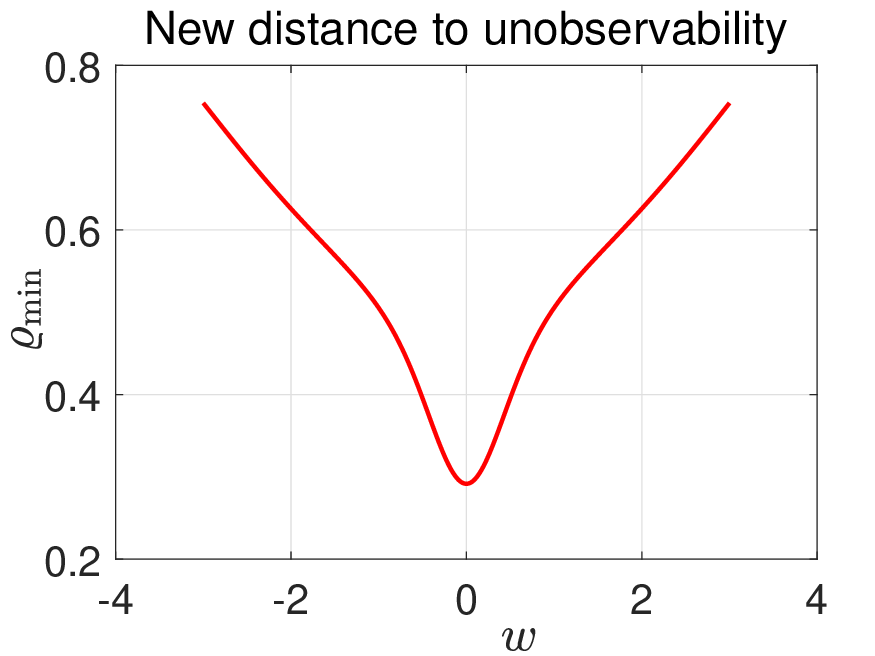}}
\caption{Distance to unobservability}
\label{fig.distance}
\end{figure}

The distance to unobservability of $(\bA,\bC)$ is about $0.4$ (see Fig. \ref{fig.distance}.(a)), which is smaller than $\varphi$, not satisfying the sufficient condition in Proposition \ref{pro.suffi_LMI}. To deal with this problem, we apply the following coordinate transformation to the chaotic system, $\xi' = \text {diag} (1,1,1/\beta)\xi$, leading to the following chaotic system
 \begin{equation*}
\begin{split}
 \dot\xi' = \left [ \begin{array}{ccc}
 0&-1  & -\beta \\
1 &0 & 0 \\
0&0.5\beta&-0.5
\end{array} \right]\xi'+ \left [ \begin{array}{c}
0\\
0\\
-\frac{0.5}{\beta}\bar y^2
\end{array} \right].
 \end{split}
\end{equation*}

  Taking $\beta = 100$, according to \eqref{eq.estimate_phi}, $\varphi$ can be estimated as $\varphi = 0.025$, and, in the new coordinate, the distance to unobservability of $(\bA,\bC)$ is about $0.3$ (see Fig. \ref{fig.distance} (b)), which satisfies the sufficient condition in proposition \ref{pro.suffi_LMI}.

 Then, by solving \eqref{eq.LMI_0}, $\bP$ and $\bN$ are obtained and $\bL$ is calculated as
 \begin{equation*}
\bL=\bP^{-1}\bN=\begin{bmatrix}
 -147.5759  &192.5241\\
  -83.7116  &107.5186\\
  -31.3697  &  4.5360\\
  -10.1254  &  6.4027\\
  -25.5599  & 17.7149\\
   -8.3158 &   2.4874\\
  175.7012 &-197.6720
  \end{bmatrix}.
   \end{equation*}

  \begin{figure}[ht]
\centering
\subfigure[]{\includegraphics[width=0.48\columnwidth]{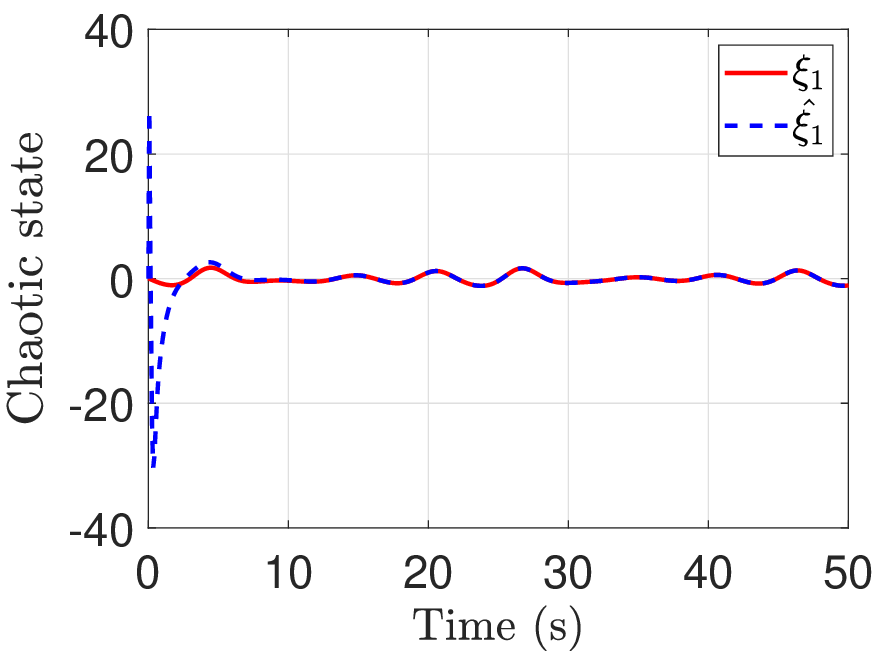}}
\subfigure[]{\includegraphics[width=0.48\columnwidth]{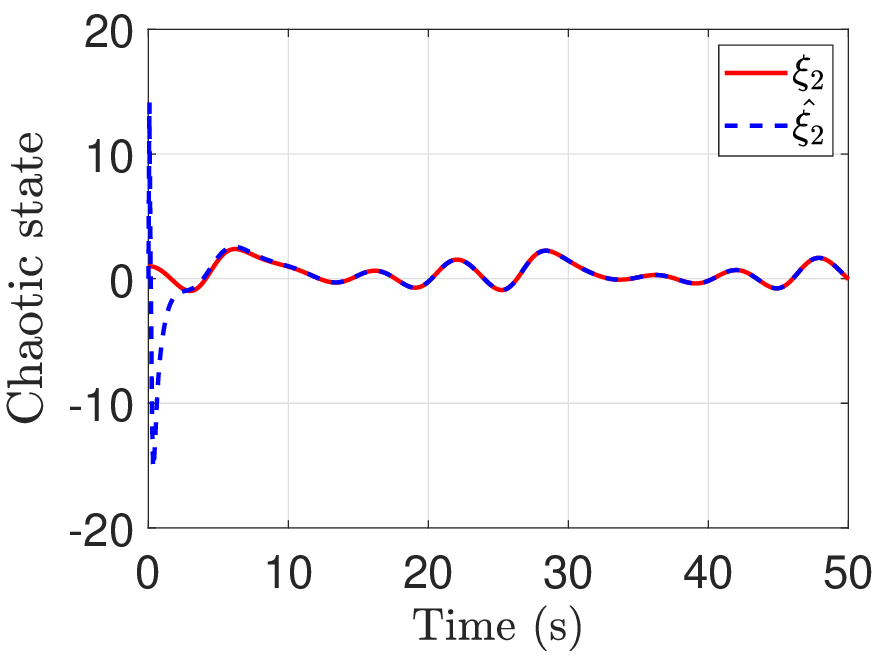}}
\subfigure[]{\includegraphics[width=0.48\columnwidth]{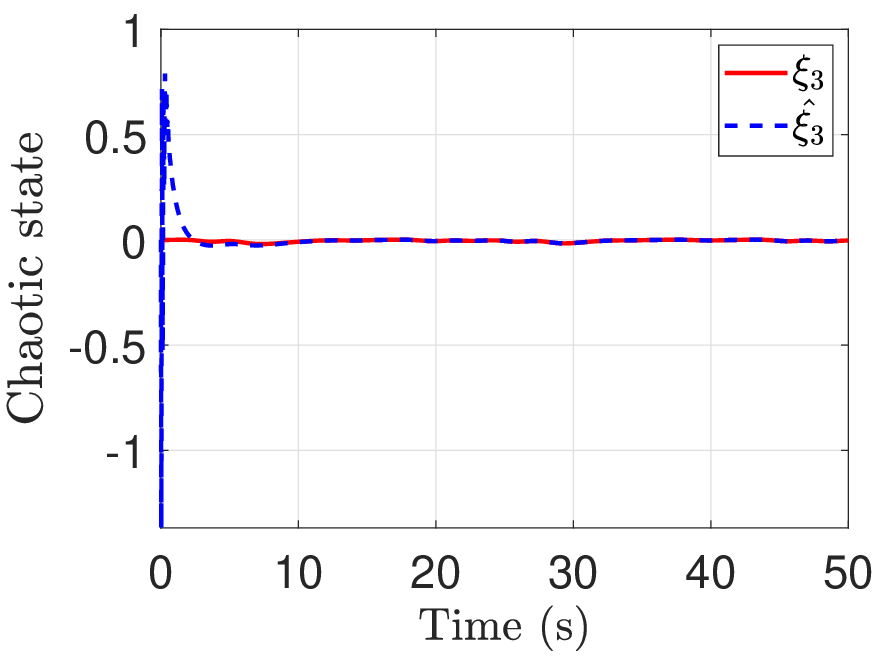}}
\subfigure[]{\includegraphics[width=0.48\columnwidth]{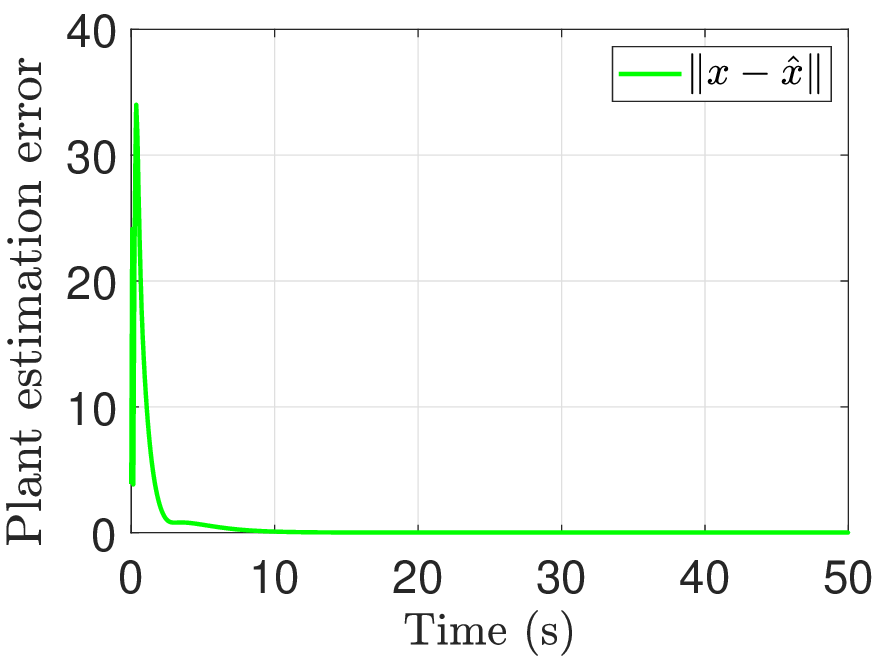}}
\caption{Simulation results without attack.  (a)-(c) The chaotic signal and their corresponding estimations. (d) The norm of estimation error of the physical plant.}
\label{fig.no_attack}
\end{figure}

\begin{figure}[ht]
\centering
\includegraphics[width=0.75\columnwidth]{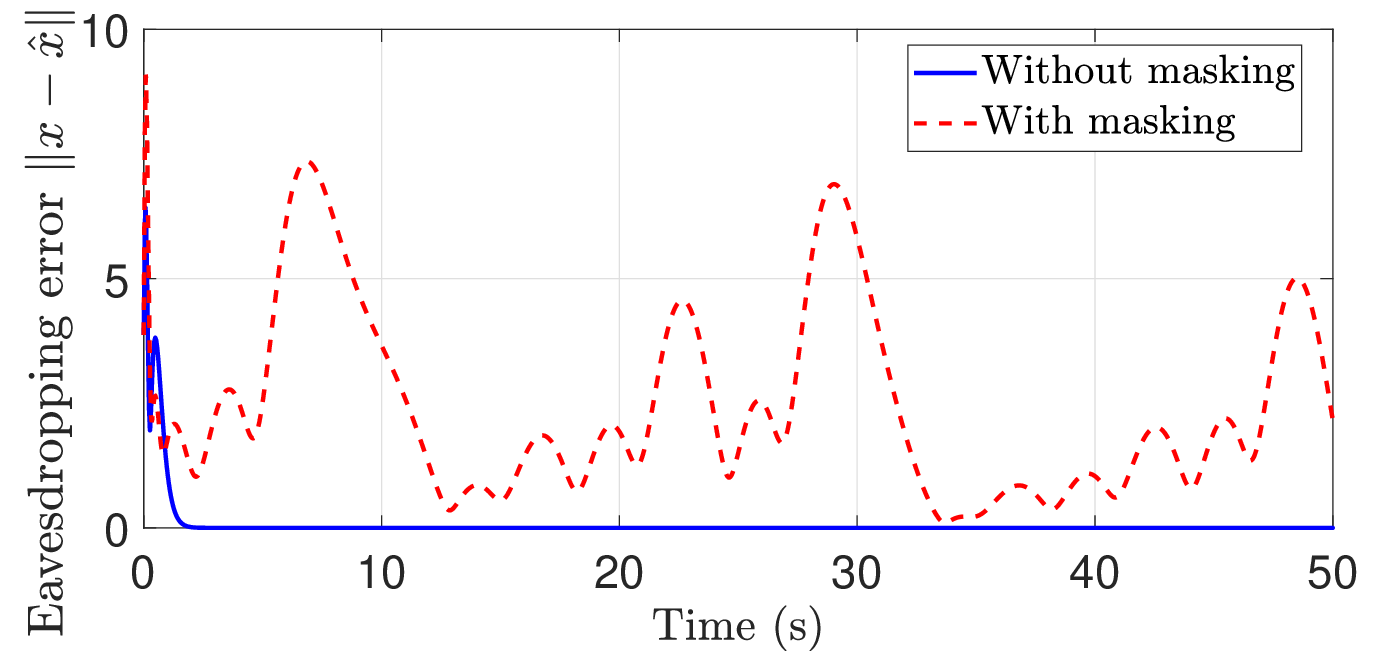}
 \caption{The eavesdropping error with and without masking protocol.}
\label{fig.eva}
\end{figure}

\begin{figure}[ht]
\centering
\includegraphics[width=0.75\columnwidth]{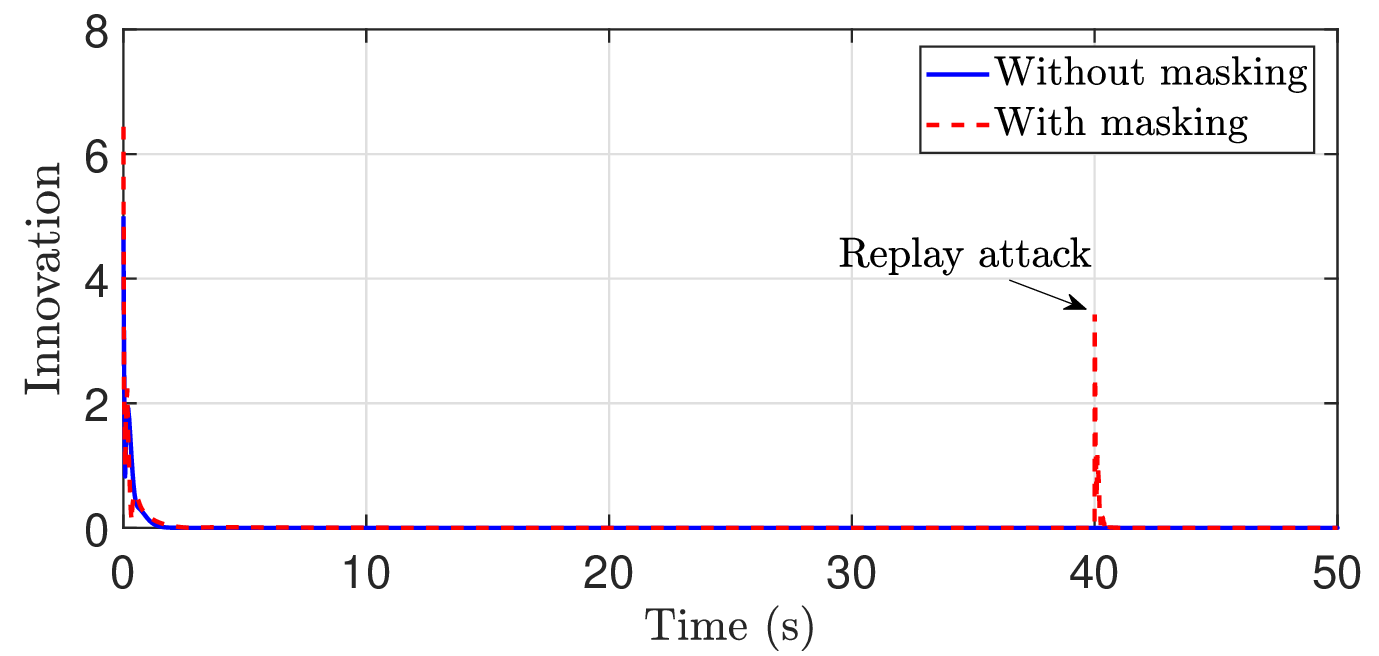}
\caption{The innovation under replay attack.}
\label{fig.replay}
\end{figure}

\begin{figure}[ht]
\centering
\includegraphics[width=0.75\columnwidth]{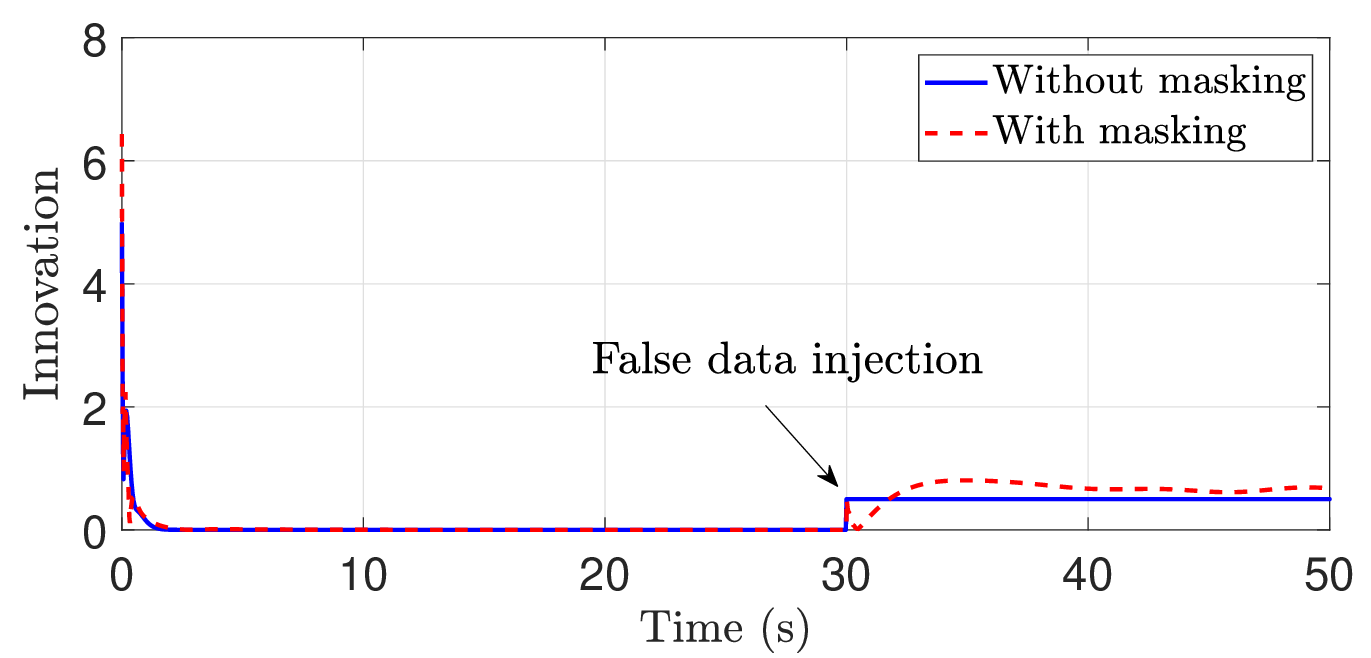}
 \caption{The innovation under FDI attack.}
\label{fig.fdi}
\end{figure}

 In addition, to stabilize the system, the control input is designed as $u = -K\hat x$ with
 \begin{equation*}
K= \begin{bmatrix}
    0.0703  & -0.2808 &  -0.3223 &  -0.0021\\
   -3.4397  & -7.2033 &  -0.1029  &  0.3634\\
   -0.5106  & -1.1128  & -0.0483  &  0.0545\\
  \end{bmatrix},
  \end{equation*}
and, for comparison, $L$ of the original estimator $\Sigma_{e}$ is chosen as
\begin{equation*}
L=\begin{bmatrix}
 -0.1860  & -2.2246 &  -3.6838 &   6.3543\\
  -17.6414  &  1.4233 & -10.1427 &  25.9750\\
  \end{bmatrix}.
   \end{equation*}

The simulation result when there is no attack is given in Fig. \ref{fig.no_attack}, where both chaotic dynamic and physical plant are well estimated,  verifying the effectiveness of the proposed method.

 The simulation result under eavesdropping attack is given in Fig. \ref{fig.eva}, where $\Sigma_\text{eav}$ is used for estimating protected $\Sigma_{\bP}$ and unprotected system $\Sigma_p$. From Fig. \ref{fig.eva} we can see that if the sensor measurements are not protected, the eavesdropping error converges to zero, where as the eavesdropping error is bounded but does not converge if the sensor measurements are masked.

The simulation result under replay attack is given in Fig. \ref{fig.replay}, where the attack is launched at time point $40s$. The replay attack $\Sigma_\text{replay}$ is launched for the protected system $\Sigma_{\bP}$ and unprotected system $\Sigma_p$. We can find that, if without masking, the innovation maintains $0$ when replay attack launches, where as the innovation jumps with masking, implying that the replay can be detected with the masking protocol and the replay attack strategy is no longer valid.

 The simulation result under FDI attack is given in Fig. \ref{fig.fdi}, where $M$ is supposed as $0.5$ and the attack is launched at time point $30s$. The FDI attack $\Sigma_\text{FDI}$ is launched for both protected $\Sigma_{\bP}$ and unprotected system $\Sigma_p$. If the measurements are not protected, the innovation difference always less than $0.5$, which satisfies the stealthy condition \eqref{eq.stealthy}. However, if the  sensor measurements are masked, the innovation difference beyond $0.5$, and thus FDI attack $\Sigma_\text{FDI}$ is no longer stealthy.

\section{Conclusion}
 In this paper, a chaotic masking protocol has been proposed to secure the remote estimator. Since an extended estimator was designed to estimate both chaotic states and physical system states, the masking protocol does not require extra communication to synchronize the masking-damasking pair and its effect can be perfectly removed in steady state. To verify the effectiveness of the proposed chaotic masking protocol, three types of attacks were considered. Future works will focus on extending the
proposed framework to mask non-linear systems and to protect whole control system loops.

\bibliography{ifacconf}

\end{document}